\newtheorem{theorem}{Theorem}
\newtheorem{proposition}{Proposition}
\newtheorem{lemma}{Lemma}
\newtheorem{example}{Example}
\newtheorem{remark}{Remark}
\begin{document}

\title{Robust Polynomial Reconstruction via Chinese Remainder Theorem in the Presence of Small Degree Residue Errors}

\author{\mbox{Li Xiao \,\, and  \,\,  Xiang-Gen Xia}
\thanks{The authors are with Department of
Electrical and Computer Engineering,
University of Delaware, Newark, DE 19716, U.S.A. (e-mail:
\{lixiao, xxia\}@ee.udel.edu).}
}
\maketitle

\begin{abstract}
Based on unique decoding of the polynomial residue code with non-pairwise coprime moduli, a polynomial with degree less than that of the least common multiple (lcm) of all the moduli can be accurately reconstructed when the number of residue errors is less than half the minimum distance of the code. However, once the number of residue errors is beyond half the minimum distance of the code, the unique decoding may fail and lead to a large reconstruction error. In this paper, assuming that all the residues are allowed to have errors with small degrees, we consider how to reconstruct the polynomial as accurately as possible in the sense that a reconstructed polynomial is obtained with only the last $\tau$ number of coefficients being possibly erroneous, when the residues are affected by errors with degrees upper bounded by $\tau$. In this regard, we first propose a multi-level robust Chinese remainder theorem (CRT) for polynomials, namely, a trade-off between the dynamic range of the degree of the polynomial to be reconstructed and the residue error bound $\tau$ is formulated. Furthermore, a simple closed-form reconstruction algorithm is also proposed.

\end{abstract}
\begin{IEEEkeywords}
Chinese remainder theorem, polynomial reconstruction, residue codes, residue errors, residue number systems.
\end{IEEEkeywords}

\section{Introduction}\label{sec1}
Chinese remainder theorem (CRT) is a fundamental result in number theory, which has been widely used in computer architecture, digital signal processing, cryptography, etc. over the past few decades \cite{CRT1,CRT2,CRT3}, as it decomposes a ring of bigger size into several independent rings of smaller sizes. In order to protect systems again errors, error-correcting codes based on the CRT (called residue codes) have been developed for residue error detection and correction in the literature \cite{lly2-2002,lly3-2002,lle-2006,vgoh2008,sundaram,poly1,poly3}. In this paper, motivated by fault-tolerant polynomial-type operations (e.g., cyclic convolution, correlation and FFT computations \cite{peb,yishengsu,mura,haining}) with reduced complexity in digital signal processing systems,
we consider polynomial reconstruction via the CRT for polynomials in the presence of residue errors. It is known that when the number of the residue errors is limited (e.g., less than half the minimum distance of the code), a polynomial with degree less than that of the least common multiple (lcm) of all the moduli can be accurately reconstructed as a unique output in the decoding of the polynomial residue code with non-pairwise coprime moduli \cite{sundaram}. However, if the number of the residue errors is larger (e.g., beyond half the minimum distance of the code), the decoding of the polynomial residue code may fail and lead to a large reconstruction error, and in this case,
how we can reconstruct the polynomial as accurately as possible is of interest in this paper.

In \cite{xiaoxia2}, a robust CRT for polynomials has been studied when all the residues are allowed to have errors but the degrees of the errors are restricted to be small. It basically says that a reconstructed polynomial can be obtained with only the last $\tau$ number of coefficients being possibly erroneous, when the residues are affected by errors with degrees upper bounded by $\tau$, where $\tau$ is called the residue error bound. A sufficient condition for $\tau$ is proposed in \cite{xiaoxia2}. It is not surprising in the robust CRT for polynomials proposed in \cite{xiaoxia2} that the degree of the polynomial to be reconstructed has to be less than the degree of the lcm of all the moduli the same as the CRT for polynomials. In this paper, we propose a multi-level robust CRT for polynomials, which is a generalization of the robust CRT for polynomials proposed in \cite{xiaoxia2}. It reveals a trade-off between the range of the degree of the polynomial to be reconstructed and the residue error bound. In other words, we may increase the residue error bound at the cost of decreasing the dynamic range of the degree of the polynomial to be reconstructed.
Moreover, a simple closed-form reconstruction algorithm is also proposed. Note that there is also some work on the multi-level robust CRT for integers \cite{new,lllxxx}, where the residue vectors of all nonnegative integers less than the dynamic range are presented as points connected by the slanted lines with the slope of $1$ in a high-dimensional space, and a robust reconstruction is obtained by finding the closest point to the erroneous residues on one of the slanted lines. Obviously, this geometric representation method does not work for polynomial cases here. Therefore, the result of this paper is not directly related to that work \cite{new,lllxxx}.

The following notations will be used. Let $\mathbb{F}$ be a field and $\mathbb{F}[x]$ denote the set of all polynomials with coefficients in $\mathbb{F}$ and indeterminate $x$. The highest power of $x$ in a polynomial $f(x)$ is termed the degree of the polynomial, and denoted by $\mbox{deg}\left(f(x)\right)$. All the elements of $\mathbb{F}$ can be expressed as polynomials of degree $0$ and are termed scalars. A polynomial of degree $n$ is called monic if the coefficient of $x^n$ is $1$. We denote the greatest common divisor (gcd) and lcm of a set of polynomials $\{f_i(x)\}_{i=1}^L$ by $\mbox{gcd}\left(f_1(x),\cdots,f_L(x)\right)$ and
$\mbox{lcm}\left(f_1(x),\cdots,f_L(x)\right)$, respectively. For the uniqueness, $\mbox{gcd}(\cdot)$ and $\mbox{lcm}(\cdot)$ are both taken to be monic polynomials. Two polynomials are said to be coprime if their $\mbox{gcd}$ is $1$. The residue of $f(x)$ modulo $g(x)$ is denoted by $\left| f(x)\right|_{g(x)}$. Throughout the paper, all polynomials considered are in $\mathbb{F}[x]$.

\section{Preliminaries}\label{sec2}
Let $m_1(x),\cdots,m_L(x)$ be $L$ non-pairwise coprime moduli. Then, a polynomial $a(x)$ with degree less than that of the lcm of all the moduli can be equivalently represented by its residues $a_i(x)=|a(x)|_{m_i(x)}$ or $a_i(x)\equiv a(x)\mod m_i(x)$, i.e., there exist $k_i(x)$ (called folding polynomials) such that
\begin{equation}\label{folding}
a(x)=k_i(x)m_i(x)+a_i(x)
\end{equation}
with $\mbox{deg}\left(a_i(x)\right)<\mbox{deg}\left(m_i(x)\right)$, for $1\leq i\leq L$.
Conversely, $a(x)$ can be reconstructed from its residues via the CRT for polynomials \cite{CRT1,sundaram}.

If $t$ errors with values $e_{i_1}(x),\cdots,e_{i_t}(x)$ have occurred in the transmission, then the received residues will be given by, for $1\leq i\leq L$,
\begin{equation}
\tilde{a}_i(x) =
  \begin{cases}
    a_i(x)+e_i(x) & \quad \mbox{if } i\in\{i_1,\cdots,i_t\}\\
    a_i(x)  & \quad \mbox{otherwise.}\\
  \end{cases}
\end{equation}
The residue errors $e_i(x)$ satisfy $\mbox{deg}(e_i(x))<\mbox{deg}(m_i(x))$. In \cite{sundaram}, the residue error correction capability in the polynomial residue code with non-pairwise coprime moduli and code distance $d$ has been studied. It is stated in \cite{sundaram} that $a(x)$ with $\mbox{deg}\left(a(x)\right)<\mbox{deg}\left(\mbox{lcm}(m_1(x),\cdots,m_L(x))\right)$ can be accurately reconstructed by the unique decoding when only $t\leq\lfloor(d-1)/2\rfloor$ errors of arbitrary values are in the residues, where $\lfloor\cdot\rfloor$ stands for the floor function. Unfortunately, if the number of residue errors is beyond the error correction capability of the polynomial residue code, the unique decoding may fail and lead to a large reconstruction error. In this case, what we can do is to reconstruct the polynomial as accurately as possible.

In this paper, we are interested in a robust reconstruction problem, that is, a reconstructed polynomial $\hat{a}(x)$ can be obtained such that $\mbox{deg}(\hat{a}(x)-a(x))\leq\tau$ when all residues $a_i(x)$ are allowed to have errors $e_i(x)$ with small degrees upper bounded by $\tau$ (i.e., $\mbox{deg}(e_i(x))\leq\tau$) for $1\leq i\leq L$,
where $\tau$ is called the residue error bound.
Recently, a robust CRT for polynomials has been proposed in \cite{xiaoxia2} as stated in the following result.
\begin{proposition}\cite{xiaoxia2}\label{proposition1}
If the residue error bound $\tau$ satisfies
\begin{equation}\label{upbound}
\tau<\max_{1\leq i\leq L}\min_{1\leq j\leq L; j\neq i}(\mbox{deg}(\mbox{gcd}(m_i(x),m_j(x)))),
\end{equation}
then $a(x)$ with degree less than $\mbox{deg}\left(\mbox{lcm}(m_1(x),\cdots,m_L(x))\right)$ can be robustly reconstructed.
\end{proposition}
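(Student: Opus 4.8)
The plan is to recover \emph{exactly} the folding polynomial attached to one well-chosen modulus, and then to assemble $\hat{a}(x)$ from that folding polynomial together with the (possibly corrupted) residue at that modulus. Let $i_0$ be an index attaining the maximum in \eqref{upbound}, so that $\mbox{deg}(\mbox{gcd}(m_{i_0}(x),m_j(x)))>\tau$ for \emph{every} $j\neq i_0$, and let $k_{i_0}(x)$ be the folding polynomial of $a(x)$ modulo $m_{i_0}(x)$ as in \eqref{folding}. If $k_{i_0}(x)$ can be determined from the erroneous residues, then setting $\hat{a}(x):=k_{i_0}(x)m_{i_0}(x)+\tilde{a}_{i_0}(x)$ yields $\hat{a}(x)-a(x)=\tilde{a}_{i_0}(x)-a_{i_0}(x)=e_{i_0}(x)$, whose degree is at most $\tau$ --- exactly the robust reconstruction claimed. (If $\mbox{deg}(a(x))<\mbox{deg}(m_{i_0}(x))$, then $k_{i_0}(x)=0$ and nothing has to be done, so assume $\mbox{deg}(a(x))\geq\mbox{deg}(m_{i_0}(x))$.) Hence it suffices to show that $k_{i_0}(x)$ is recoverable from the data.

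The crux is the observation that an error of degree at most $\tau$ leaves no trace once one divides by a polynomial of degree exceeding $\tau$. Fix $j\neq i_0$ and set $d(x):=\mbox{gcd}(m_{i_0}(x),m_j(x))$, $p(x):=m_{i_0}(x)/d(x)$, and $q(x):=m_j(x)/d(x)$; then $\mbox{gcd}(p(x),q(x))=1$, so $p(x)$ is invertible modulo $q(x)$. Writing \eqref{folding} for the indices $i_0$ and $j$ and subtracting gives $k_{i_0}(x)m_{i_0}(x)-k_j(x)m_j(x)=a_j(x)-a_{i_0}(x)$, and since $d(x)$ divides both $m_{i_0}(x)$ and $m_j(x)$ it divides $a_j(x)-a_{i_0}(x)$ too. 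Now divide the observed difference
\begin{equation*}
\tilde{a}_j(x)-\tilde{a}_{i_0}(x)=\bigl(a_j(x)-a_{i_0}(x)\bigr)+\bigl(e_j(x)-e_{i_0}(x)\bigr)
\end{equation*}
by $d(x)$ and call the quotient $Q_j(x)$. Because $\mbox{deg}(e_j(x)-e_{i_0}(x))\leq\tau<\mbox{deg}(d(x))$, the error term falls entirely into the remainder, so $Q_j(x)=(a_j(x)-a_{i_0}(x))/d(x)$ \emph{exactly}. Dividing the subtracted identity by $d(x)$ and reducing modulo $q(x)$ gives $k_{i_0}(x)p(x)\equiv Q_j(x)\pmod{q(x)}$, hence
\begin{equation*}
k_{i_0}(x)\equiv p(x)^{-1}Q_j(x)\pmod{m_j(x)/\mbox{gcd}(m_{i_0}(x),m_j(x))},
\end{equation*}
and the right-hand side is computable from the received residues for every $j\neq i_0$.

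It remains to reconstruct $k_{i_0}(x)$ from these residues via the CRT for polynomials with the (in general non-pairwise coprime) moduli $m_j(x)/\mbox{gcd}(m_{i_0}(x),m_j(x))$, $j\neq i_0$; this is legitimate as soon as $\mbox{deg}(k_{i_0}(x))$ is less than the degree of the lcm of those moduli, which is precisely the range within which the CRT for polynomials reconstructs uniquely. For the degree count I would use the elementary identity
\begin{equation*}
\mbox{lcm}(m_1(x),\cdots,m_L(x))=m_{i_0}(x)\cdot\mbox{lcm}_{j\neq i_0}\bigl(m_j(x)/\mbox{gcd}(m_{i_0}(x),m_j(x))\bigr),
\end{equation*}
which is verified by comparing the exponent of each irreducible factor on the two sides. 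Since $\mbox{deg}(a(x))<\mbox{deg}(\mbox{lcm}(m_1(x),\cdots,m_L(x)))$ and $\mbox{deg}(k_{i_0}(x))=\mbox{deg}(a(x))-\mbox{deg}(m_{i_0}(x))$ under our assumption, it follows that $\mbox{deg}(k_{i_0}(x))$ is strictly less than the degree of $\mbox{lcm}_{j\neq i_0}(m_j(x)/\mbox{gcd}(m_{i_0}(x),m_j(x)))$. Hence the CRT recovers $k_{i_0}(x)$ uniquely and correctly, and the reconstruction is complete.

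The step I expect to require the most care is the exactness of the quotient $Q_j(x)$: one must make sure that the residue errors $e_i(x)$, of degree at most $\tau$, contribute nothing to the quotient after division by $d(x)$, and this is exactly where hypothesis \eqref{upbound} is used --- it must hold simultaneously for all $j\neq i_0$ with one common index $i_0$, which is why the $\max_i\min_j$ structure appears. A secondary point to check is the lcm identity together with the attendant degree bookkeeping, since this is what certifies that a single modulus $m_{i_0}(x)$ already carries enough information to pin down $k_{i_0}(x)$, and therefore $a(x)$.
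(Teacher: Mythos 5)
Your proof is correct. Note that this paper does not actually prove Proposition \ref{proposition1}; it is imported from \cite{xiaoxia2}, and the only in-paper argument of this type is the two-moduli, multi-level proof of Theorem \ref{maintheo}. Your argument is essentially the standard closed-form route taken there: pick the reference index $i_0$ attaining the max--min in \eqref{upbound}, observe that the residue-difference $\tilde{a}_j(x)-\tilde{a}_{i_0}(x)$ equals the error-free difference plus an error of degree at most $\tau<\mbox{deg}(\mbox{gcd}(m_{i_0}(x),m_j(x)))$, strip that error, and invert $\Gamma$-type factors (B\'{e}zout) to pin down the folding polynomial $k_{i_0}(x)$, finally setting $\hat{a}(x)=\hat{k}_{i_0}(x)m_{i_0}(x)+\tilde{a}_{i_0}(x)$ so that $\hat{a}(x)-a(x)=e_{i_0}(x)$. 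The genuine differences are presentational rather than substantive: you remove the error with a single Euclidean division by $\mbox{gcd}(m_{i_0}(x),m_j(x))$ (which suffices at this full-dynamic-range level, where the paper's successive reductions $f_1,f_2,\dots$ are only needed for the intermediate levels with $\mbox{deg}(\sigma_i(x))>0$), and you handle general $L$ explicitly by collecting the congruences $k_{i_0}(x)\equiv p^{-1}(x)Q_j(x) \bmod m_j(x)/\mbox{gcd}(m_{i_0}(x),m_j(x))$ and assembling them with the generalized CRT for non-pairwise coprime moduli, justified by your lcm/degree identity --- a step this paper omits (it treats $L=2$ and defers multi-modular systems to a cascade argument). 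Your degree bookkeeping, the exactness of the quotient $Q_j(x)$, and the coprimality of $p(x)$ and $q(x)$ are all sound; the only minor looseness is the aside that ``nothing has to be done'' when $\mbox{deg}(a(x))<\mbox{deg}(m_{i_0}(x))$, since the decoder does not know $\mbox{deg}(a(x))$ --- but your CRT-based recovery of $k_{i_0}(x)$ in fact returns $0$ in that case, so the same procedure covers it uniformly.
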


A closed-form reconstruction algorithm for Proposition \ref{proposition1} is also proposed in  \cite{xiaoxia2}. For more details, we refer the reader to \cite{xiaoxia2}. According to Proposition \ref{proposition1}, when all the residue errors have small degrees upper bounded by $\tau$ in (\ref{upbound}), we can obtain a reconstructed polynomial $\hat{a}(x)$ for $a(x)$ with degree less than $\mbox{deg}\left(\mbox{lcm}(m_1(x),\cdots,m_L(x))\right)$ such that only the last $\tau$ number of coefficients of $\hat{a}(x)$ may be  different from those of $a(x)$ while the other coefficients are accurately determined. One can see that the dynamic range of the degree of $a(x)$ is $\mbox{deg}\left(\mbox{lcm}(m_1(x),\cdots,m_L(x))\right)$ in the robust CRT for polynomials. In the following, by relaxing the dynamic range of the degree of $a(x)$, we generalize the robust CRT for polynomials in Proposition \ref{proposition1} in a multi-level strategy, called a multi-level robust CRT for polynomials in this paper.

\section{Multi-Level Robust CRT for Polynomials}\label{sec3}
In this section, we first investigate a multi-level robust CRT for polynomials for two-modular systems. A trade-off between the dynamic range of $\mbox{deg}(a(x))$ and the residue error bound $\tau$ is exactly formulated. A simple closed-form reconstruction algorithm is then proposed. By the cascade architecture of CRT as in \cite{lllxxx}, some multi-level result for multi-modular systems can be easily obtained, and thus we skip it in this paper.

Let $m_1(x),m_2(x)$ be two non-coprime polynomial moduli with $\mbox{deg}(m_1(x))\leq\mbox{deg}(m_2(x))$, and $m(x)$ and $M(x)$ be their gcd and lcm, respectively, i.e., $m(x)=\mbox{gcd}\left(m_1(x),m_2(x)\right)$ and $M(x)=\mbox{lcm}\left(m_1(x),m_2(x)\right)$. Since $m(x)$ is the gcd of $m_1(x)$ and $m_2(x)$, we can
write $m_1(x)=m(x)\Gamma_1(x)$ and $m_2(x)=m(x)\Gamma_2(x)$ with $\Gamma_1(x),\Gamma_2(x)$ being coprime.
Let $\sigma_{-1}(x)=\Gamma_2(x),\sigma_0(x)=\Gamma_1(x)$, and for $i\geq1$, we define
\begin{equation}\label{defsigma}
\sigma_i(x)=|\sigma_{i-2}(x)|_{\sigma_{i-1}(x)}.
\end{equation}
It is not hard to see that there must exist an index $K$ with $K\geq0$ such that
\begin{multline}
  \mbox{deg}(\sigma_{-1}(x))>\mbox{deg}(\sigma_{0}(x))>\cdots \\
  >\mbox{deg}(\sigma_{K}(x))>\mbox{deg}(\sigma_{K+1}(x))=0.
\end{multline}

\begin{lemma}\label{longlemma}
Let $a(x)$ be a polynomial with
$\mbox{deg}(a(x))<\mbox{deg}(M(x))-\mbox{deg}(\sigma_{i}(x))$ for some $i$ with $1\leq i\leq K+1$.
If $\mbox{deg}(a_2(x))<\mbox{deg}(m_1(x))\mbox{ and }a_1(x)\neq a_2(x)$,
we have
\begin{equation}\label{lem1}
\mbox{deg}(m(x))+\mbox{deg}(\sigma_{i}(x))\leq\mbox{deg}(a_1(x)-a_2(x))<\mbox{deg}(m_1(x)).
\end{equation}
\end{lemma}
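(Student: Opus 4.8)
The plan is to prove the two inequalities in (\ref{lem1}) separately. The upper bound $\deg(a_1(x)-a_2(x))<\deg(m_1(x))$ is immediate: $\deg(a_1(x))<\deg(m_1(x))$ since $a_1(x)$ is a residue modulo $m_1(x)$, and $\deg(a_2(x))<\deg(m_1(x))$ by hypothesis. For the lower bound, observe that $m(x)$ divides both $m_1(x)$ and $m_2(x)$, so reducing (\ref{folding}) modulo $m(x)$ gives $a_1(x)\equiv a(x)\equiv a_2(x)\pmod{m(x)}$; hence $m(x)\mid(a_1(x)-a_2(x))$ and we may write $a_1(x)-a_2(x)=m(x)c(x)$ with $c(x)\ne0$ (because $a_1(x)\ne a_2(x)$). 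It then suffices to show $\deg(c(x))\ge\deg(\sigma_i(x))$.

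First I would rewrite $c(x)$ through the coprime pair $\Gamma_1(x),\Gamma_2(x)$. Using $a(x)=k_1(x)m_1(x)+a_1(x)=k_2(x)m_2(x)+a_2(x)$ and $m_j(x)=m(x)\Gamma_j(x)$, one gets $m(x)c(x)=k_2(x)m(x)\Gamma_2(x)-k_1(x)m(x)\Gamma_1(x)$, i.e.\ $c(x)=k_2(x)\Gamma_2(x)-k_1(x)\Gamma_1(x)$. To bound $k_2(x)$, take degrees in $a(x)-a_2(x)=k_2(x)m_2(x)$: we have $\deg(a(x))<\deg(M(x))-\deg(\sigma_i(x))$ by hypothesis, and $\deg(a_2(x))<\deg(m_1(x))\le\deg(m_2(x))=\deg(M(x))-\deg(\Gamma_1(x))\le\deg(M(x))-\deg(\sigma_i(x))$, the last step since $\deg(\sigma_i(x))\le\deg(\sigma_0(x))=\deg(\Gamma_1(x))$. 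Hence $\deg(k_2(x))+\deg(m_2(x))<\deg(M(x))-\deg(\sigma_i(x))$, i.e.\ $\deg(k_2(x))<\deg(\Gamma_1(x))-\deg(\sigma_i(x))$.

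The core of the argument is a polynomial analogue of the best-approximation property of the Euclidean (continued-fraction) algorithm applied to $\Gamma_2(x),\Gamma_1(x)$. The extended algorithm produces cofactors $u_j(x),v_j(x)$ with $u_j(x)\Gamma_2(x)+v_j(x)\Gamma_1(x)=\sigma_j(x)$, starting from $u_{-1}(x)=v_0(x)=1$, $u_0(x)=v_{-1}(x)=0$ and satisfying the same two-term recursion as the $\sigma_j(x)$ (extending one step past $\sigma_{K+1}(x)$ when $i=K+1$). A routine induction yields $u_j(x)v_{j+1}(x)-u_{j+1}(x)v_j(x)=(-1)^{j+1}$, so each consecutive pair $\{(u_j,v_j),(u_{j+1},v_{j+1})\}$ is an $\mathbb{F}[x]$-basis of $\mathbb{F}[x]^2$, and a second induction gives $\deg(u_{j+1}(x))=\deg(\Gamma_1(x))-\deg(\sigma_j(x))$, which strictly increases in $j$ because $\deg(\sigma_j(x))$ strictly decreases for $j\ge0$. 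I would then prove the claim: if $w(x)=\beta(x)\Gamma_2(x)+\alpha(x)\Gamma_1(x)\ne0$ with $\deg(\beta(x))<\deg(u_{i+1}(x))$, then $\deg(w(x))\ge\deg(\sigma_i(x))$. Granting it, apply the claim with $w(x)=c(x)$, $\beta(x)=k_2(x)$, $\alpha(x)=-k_1(x)$: since $\deg(k_2(x))<\deg(\Gamma_1(x))-\deg(\sigma_i(x))=\deg(u_{i+1}(x))$, we obtain $\deg(c(x))\ge\deg(\sigma_i(x))$, hence $\deg(a_1(x)-a_2(x))=\deg(m(x))+\deg(c(x))\ge\deg(m(x))+\deg(\sigma_i(x))$, which is (\ref{lem1}).

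To prove the claim, expand $(\beta(x),\alpha(x))=P(x)(u_i(x),v_i(x))+Q(x)(u_{i+1}(x),v_{i+1}(x))$ in the above basis, so $w(x)=P(x)\sigma_i(x)+Q(x)\sigma_{i+1}(x)$ and $\beta(x)=P(x)u_i(x)+Q(x)u_{i+1}(x)$. If $Q(x)=0$, then $P(x)\ne0$ (otherwise $w(x)=0$) and $\deg(w(x))=\deg(P(x))+\deg(\sigma_i(x))\ge\deg(\sigma_i(x))$. If $Q(x)\ne0$, then $\deg(Q(x)u_{i+1}(x))\ge\deg(u_{i+1}(x))>\deg(\beta(x))$, which forces $\deg(P(x)u_i(x))=\deg(Q(x)u_{i+1}(x))$, hence $\deg(P(x))-\deg(Q(x))=\deg(u_{i+1}(x))-\deg(u_i(x))>0$; together with $\deg(\sigma_i(x))>\deg(\sigma_{i+1}(x))$ this gives $\deg(P(x)\sigma_i(x))>\deg(Q(x)\sigma_{i+1}(x))$, so again $\deg(w(x))=\deg(P(x))+\deg(\sigma_i(x))\ge\deg(\sigma_i(x))$. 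I expect this claim --- essentially that the Euclidean convergents realize the minimal degree of $\beta(x)\Gamma_2(x)+\alpha(x)\Gamma_1(x)$ among nonzero choices with $\deg(\beta(x))$ bounded --- to be the main obstacle: the cofactor bookkeeping is routine, but the leading-term cancellation case must be handled carefully so that $\deg(w(x))$ is pinned down exactly rather than merely bounded on one side. Everything else (degrees of residues, folding polynomials, and the gcd/lcm relations) is straightforward.
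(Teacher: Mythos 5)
Your proof is correct, but the core of it is genuinely different from the paper's. You and the paper share the same setup: the upper bound from $\deg(a_1(x))$, $\deg(a_2(x))<\deg(m_1(x))$; the observation that $m(x)$ divides $a_1(x)-a_2(x)$; the B\'ezout-type identity $c(x)=k_2(x)\Gamma_2(x)-k_1(x)\Gamma_1(x)$ for $c(x)=(a_1(x)-a_2(x))/m(x)$; and the bound $\deg(k_2(x))<\deg(\Gamma_1(x))-\deg(\sigma_i(x))$ (the paper's (\ref{degreek2})). Where you diverge is the lower bound on $\deg(c(x))$: the paper works only with the remainders $\sigma_j(x)$, iteratively reducing $a_1(x)-a_2(x)$ modulo $m(x)\sigma_1(x), m(x)\sigma_2(x),\dots$ along the Euclidean chain (the quantities $r_j(x)$ and $c_j(x)$ in (\ref{zhankai})--(\ref{xiaozhan})) and closing with a contradiction, whereas you prove the classical best-approximation (convergent minimality) property of the extended Euclidean algorithm: cofactors $u_j(x),v_j(x)$ with $u_j\Gamma_2+v_j\Gamma_1=\sigma_j$, the determinant identity, the degree formula $\deg(u_{j+1})=\deg(\Gamma_1)-\deg(\sigma_j)$, and a basis expansion $(\beta,\alpha)=P(u_i,v_i)+Q(u_{i+1},v_{i+1})$ with a leading-term cancellation argument. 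Your route is more conceptual and pins the result on a standard fact about convergents, with all edge cases ($k_2(x)=0$, $i=K+1$ via $\sigma_{K+2}(x)=0$) handled; its cost is the extra cofactor bookkeeping, which the paper never needs. The paper's remainder-chasing proof, besides avoiding cofactors, is reused almost verbatim later: the iterated residues $r_j(x)$ are exactly the quantities $f_j(x)$, $g_j(x)$, $h_j(x)$ that drive Algorithm \ref{Ployalg:Framwork1} and the proof of Theorem \ref{maintheo}, so the paper's argument does double duty, while your argument would require recreating that iteration (or a separate justification) when proving correctness of the reconstruction algorithm.
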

\begin{proof}
It is straightforward that
\begin{align}\label{jinian}
\mbox{deg}(a_1(x)-a_2(x))&\leq\max(\mbox{deg}(a_1(x)),\mbox{deg}(a_2(x)))\nonumber\\
&<\mbox{deg}(m_1(x)).
\end{align}
Thus, we now only need to prove the former inequality in (\ref{lem1}), i.e., $\mbox{deg}(m(x))+\mbox{deg}(\sigma_{i}(x))\leq\mbox{deg}(a_1(x)-a_2(x))$. Based on $\mbox{deg}(a(x))<\mbox{deg}(M(x))-\mbox{deg}(\sigma_{i}(x))$ and (\ref{folding}), it is readily seen that
\begin{equation}\label{degreek2}
\mbox{deg}(k_2(x))<\mbox{deg}(\Gamma_1(x))-\mbox{deg}(\sigma_{i}(x)).
\end{equation}
Furthermore, from (\ref{folding}), we get
\begin{equation}\label{guanxi}
k_2(x)\Gamma_2(x)-k_1(x)\Gamma_1(x)=(a_1(x)-a_2(x))/m(x)
\end{equation}
and then by doing modulo $\Gamma_1(x)$ in both sides of (\ref{guanxi}), we have
\begin{equation}\label{lianxi}
k_2(x)\sigma_{1}(x)\equiv(a_1(x)-a_2(x))/m(x)\mod \Gamma_1(x).
\end{equation}
When $i=1$, i.e., $\mbox{deg}(a(x))<\mbox{deg}(M(x))-\mbox{deg}(\sigma_{1}(x))$, we have, from (\ref{degreek2}),
\begin{equation}\label{eee11}
\mbox{deg}(k_2(x)\sigma_{1}(x))\leq\mbox{deg}(k_2(x))+\mbox{deg}(\sigma_{1}(x))<\mbox{deg}(\Gamma_1(x)).
\end{equation}
From (\ref{jinian}), (\ref{lianxi}), (\ref{eee11}) and $a_1(x)\neq a_2(x)$, we have $a_1(x)-a_2(x)=k_2(x)\sigma_{1}(x)m(x)\mbox{ with }k_2(x)\neq0$,
and hence $\mbox{deg}(m(x))+\mbox{deg}(\sigma_{1}(x))\leq\mbox{deg}(a_1(x)-a_2(x))$.
When $i\geq2$, i.e., $\mbox{deg}(a(x))<\mbox{deg}(M(x))-\mbox{deg}(\sigma_{i}(x))$, we have
\begin{equation}\label{eee22}
\mbox{deg}(k_2(x)\sigma_{1}(x))<\mbox{deg}(\Gamma_1(x))+\mbox{deg}(\sigma_{1}(x))-\mbox{deg}(\sigma_{i}(x)).
\end{equation}
From (\ref{jinian}), (\ref{lianxi}) and (\ref{eee22}), there exists a polynomial $c_1(x)$ with $\mbox{deg}(c_1(x))<\mbox{deg}(\sigma_{1}(x))-\mbox{deg}(\sigma_{i}(x))$ such that
\begin{equation}\label{zhankai}
k_2(x)m(x)\sigma_1(x)=c_1(x)m(x)\Gamma_1(x)+a_1(x)-a_2(x).
\end{equation}
If $c_1(x)=0$, we have $a_1(x)-a_2(x)=k_2(x)m(x)\sigma_1(x)$, and from $a_1\neq a_2(x)$, we get $\mbox{deg}(a_1(x)-a_2(x))\geq\mbox{deg}(m(x))+\mbox{deg}(\sigma_1(x))>\mbox{deg}(m(x))+\mbox{deg}(\sigma_i(x))$. If $c_1(x)\neq0$, since
$\Gamma_1(x)=d_1(x)\sigma_1(x)+\sigma_2(x)$ holds for some $d_1(x)$ with $\mbox{deg}(d_1(x))\geq1$ from (\ref{defsigma}), (\ref{zhankai}) becomes
\begin{align}\label{zk1}
k_2(x)m(x)\sigma_1(x)&=c_1(x)d_1(x)m(x)\sigma_1(x)\nonumber\\
&+c_1(x)m(x)\sigma_2(x)+a_1(x)-a_2(x).
\end{align}
Let $r_1(x)=|a_1(x)-a_2(x)|_{m(x)\sigma_1(x)}$, then by doing modulo $m(x)\sigma_1(x)$ in both sides of (\ref{zk1}), we have
\begin{equation}\label{enene}
-c_1(x)m(x)\sigma_2(x)\equiv r_1(x)\mod m(x)\sigma_1(x).
\end{equation}
Due to $\mbox{deg}(c_1(x))<\mbox{deg}(\sigma_{1}(x))-\mbox{deg}(\sigma_{i}(x))$ and (\ref{enene}), there exists a polynomial $c_2(x)$ with $\mbox{deg}(c_2(x))<\mbox{deg}(\sigma_{2}(x))-\mbox{deg}(\sigma_{i}(x))$ such that
\begin{equation}
-c_1(x)m(x)\sigma_2(x)=c_2(x)m(x)\sigma_1(x)+r_1(x).
\end{equation}
If $c_2(x)=0$, we have $r_1(x)=|a_1(x)-a_2(x)|_{m(x)\sigma_1(x)}=-c_1(x)m(x)\sigma_2(x)$, and we can get $\mbox{deg}(a_1(x)-a_2(x))\geq\mbox{deg}(m(x))+\mbox{deg}(\sigma_2(x))>\mbox{deg}(m(x))+\mbox{deg}(\sigma_i(x))$. If $c_2(x)\neq0$, similar to (\ref{zk1}) and (\ref{enene}),
since $\sigma_1(x)=d_2(x)\sigma_2(x)+\sigma_3(x)$ holds for some $d_2(x)$ with $\mbox{deg}(d_1(x))\geq1$ from (\ref{defsigma}), let $r_2(x)=|r_1(x)|_{m(x)\sigma_2(x)}$, and we have
\begin{equation}
-c_2(x)m(x)\sigma_3(x)\equiv r_2(x)\mod m(x)\sigma_2(x).
\end{equation}
Continuing this procedure, let $r_{i-1}(x)=|r_{i-2}(x)|_{m(x)\sigma_{i-1}(x)}$, and we can have, for some $c_{i-1}(x)\neq0$ with $\mbox{deg}(c_{i-1}(x))<\mbox{deg}(\sigma_{i-1}(x))-\mbox{deg}(\sigma_i(x))$,
\begin{equation}\label{xiaozhan}
-c_{i-1}(x)m(x)\sigma_i(x)\equiv r_{i-1}(x)\mod m(x)\sigma_{i-1}(x).
\end{equation}
If $\mbox{deg}(a_1(x)-a_2(x))<\mbox{deg}(m(x))+\mbox{deg}(\sigma_i(x))$, we have $r_1(x)=r_2(x)=\cdots=r_{i-1}(x)=a_1(x)-a_2(x)$. Moreover, from $\mbox{deg}(-c_{i-1}(x)m(x)\sigma_i(x))<\mbox{deg}(m(x)\sigma_{i-1}(x))$, $a_1(x)\neq a_2(x)$ and (\ref{xiaozhan}), we get $a_1(x)-a_2(x)=r_{i-1}(x)=-c_{i-1}(x)m(x)\sigma_i(x)$ with $c_{i-1}(x)\neq0$, and thereby $\mbox{deg}(a_1(x)-a_2(x))\geq\mbox{deg}(m(x))+\mbox{deg}(\sigma_i(x))$, which contradicts our assumption that $\mbox{deg}(a_1(x)-a_2(x))<\mbox{deg}(m(x))+\mbox{deg}(\sigma_i(x))$. Hence, we demonstrate that $\mbox{deg}(a_1(x)-a_2(x))\geq\mbox{deg}(m(x))+\mbox{deg}(\sigma_i(x))$.
\end{proof}

When the residues $a_1(x),a_2(x)$ have small degree errors with the residue error bound $\tau$, i.e.,
$\mbox{deg}(e_i(x))\leq\tau$ for $i=1,2$,
let $\tilde{a}_1(x),\tilde{a}_2(x)$ be the two erroneous residues, and we define
\begin{equation}\label{noseque}
q_{21}(x)=\tilde{a}_1(x)-\tilde{a}_2(x).
\end{equation}
Based on Lemma \ref{longlemma}, we then have the following result.
\begin{lemma}\label{xiaolem}
Let $a(x)$ be a polynomial with $\mbox{deg}(a(x))<\mbox{deg}(M(x))-\mbox{deg}(\sigma_{i}(x))$ for some $i$ with $1\leq i\leq K+1$, and the residue error bound $\tau$ satisfy
\begin{equation}\label{condition1}
\tau<\mbox{deg}(m(x))+\mbox{deg}(\sigma_{i}(x)).
\end{equation}
We can obtain the following three cases:
\begin{itemize}
  \item[$1)$] if $\mbox{deg}(m_1(x))>\mbox{deg}(q_{21}(x))\geq\mbox{deg}(m(x))+\mbox{deg}(\sigma_{i}(x))$, we have  $\mbox{deg}(a_2(x))<\mbox{deg}(m_1(x))$ and $a_1(x)\neq a_2(x)$;
  \item[$2)$] if $\mbox{deg}(q_{21}(x))\geq\mbox{deg}(m_1(x))$, we have $\mbox{deg}(a_2(x))\geq\mbox{deg}(m_1(x))$;
  \item[$3)$] if $\mbox{deg}(q_{21}(x))<\mbox{deg}(m(x))+\mbox{deg}(\sigma_{i}(x))$, we have $a_1(x)=a_2(x)$.
\end{itemize}
\end{lemma}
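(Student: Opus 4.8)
The plan is to reduce the trichotomy to elementary bookkeeping on degrees, invoking Lemma~\ref{longlemma} only where it is genuinely needed. The single algebraic fact I would start from is $q_{21}(x)=\tilde{a}_1(x)-\tilde{a}_2(x)=(a_1(x)-a_2(x))+(e_1(x)-e_2(x))$, so writing $e(x)=e_1(x)-e_2(x)$ we have $q_{21}(x)-e(x)=a_1(x)-a_2(x)$ with $\mbox{deg}(e(x))\le\tau$. I would then record two preliminary inequalities that drive all three cases: (i) by \eqref{condition1}, $\mbox{deg}(e(x))\le\tau<\mbox{deg}(m(x))+\mbox{deg}(\sigma_i(x))$; and (ii) since $1\le i\le K+1$, the strictly decreasing chain $\mbox{deg}(\sigma_0(x))>\mbox{deg}(\sigma_1(x))>\cdots$ gives $\mbox{deg}(\sigma_i(x))<\mbox{deg}(\sigma_0(x))=\mbox{deg}(\Gamma_1(x))$, hence $\mbox{deg}(m(x))+\mbox{deg}(\sigma_i(x))<\mbox{deg}(m(x))+\mbox{deg}(\Gamma_1(x))=\mbox{deg}(m_1(x))$, so in particular $\mbox{deg}(e(x))<\mbox{deg}(m_1(x))$. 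I would also use the obvious bound $\mbox{deg}(a_1(x))<\mbox{deg}(m_1(x))$ coming from $a_1(x)=|a(x)|_{m_1(x)}$.

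Cases 1) and 2) are then pure degree arithmetic and need no other input. For case 1), with $\mbox{deg}(m(x))+\mbox{deg}(\sigma_i(x))\le\mbox{deg}(q_{21}(x))<\mbox{deg}(m_1(x))$: if $a_1(x)=a_2(x)$ then $q_{21}(x)=e(x)$, so $\mbox{deg}(q_{21}(x))\le\tau<\mbox{deg}(m(x))+\mbox{deg}(\sigma_i(x))$, contradicting the lower bound, hence $a_1(x)\ne a_2(x)$; and if $\mbox{deg}(a_2(x))\ge\mbox{deg}(m_1(x))$ then, since $\mbox{deg}(a_1(x))<\mbox{deg}(m_1(x))$, the leading term of $a_2(x)$ survives, giving $\mbox{deg}(a_1(x)-a_2(x))=\mbox{deg}(a_2(x))\ge\mbox{deg}(m_1(x))>\mbox{deg}(e(x))$ and hence $\mbox{deg}(q_{21}(x))=\mbox{deg}(a_2(x))\ge\mbox{deg}(m_1(x))$, contradicting the upper bound, hence $\mbox{deg}(a_2(x))<\mbox{deg}(m_1(x))$. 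For case 2), assume $\mbox{deg}(a_2(x))<\mbox{deg}(m_1(x))$; then $\mbox{deg}(a_1(x)-a_2(x))<\mbox{deg}(m_1(x))$ and $\mbox{deg}(e(x))<\mbox{deg}(m_1(x))$, so $\mbox{deg}(q_{21}(x))<\mbox{deg}(m_1(x))$, contradicting $\mbox{deg}(q_{21}(x))\ge\mbox{deg}(m_1(x))$; hence $\mbox{deg}(a_2(x))\ge\mbox{deg}(m_1(x))$.

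Case 3) is where Lemma~\ref{longlemma} is the essential ingredient. From the hypothesis $\mbox{deg}(q_{21}(x))<\mbox{deg}(m(x))+\mbox{deg}(\sigma_i(x))$ together with (i), $\mbox{deg}(a_1(x)-a_2(x))=\mbox{deg}(q_{21}(x)-e(x))<\mbox{deg}(m(x))+\mbox{deg}(\sigma_i(x))$. I would argue by contradiction: suppose $a_1(x)\ne a_2(x)$. First I would rule out $\mbox{deg}(a_2(x))\ge\mbox{deg}(m_1(x))$ exactly as in case 1 (it would force $\mbox{deg}(a_1(x)-a_2(x))\ge\mbox{deg}(m_1(x))>\mbox{deg}(m(x))+\mbox{deg}(\sigma_i(x))$), so $\mbox{deg}(a_2(x))<\mbox{deg}(m_1(x))$; but then Lemma~\ref{longlemma} applies and yields $\mbox{deg}(a_1(x)-a_2(x))\ge\mbox{deg}(m(x))+\mbox{deg}(\sigma_i(x))$, contradicting the displayed bound. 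Hence $a_1(x)=a_2(x)$.

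The only genuinely nontrivial step is the appeal to Lemma~\ref{longlemma} in case 3; everything else is trivial once the inequalities are lined up. The point I would be most careful about is observation (ii): the strict inequality $\mbox{deg}(m(x))+\mbox{deg}(\sigma_i(x))<\mbox{deg}(m_1(x))$ must hold for the entire range $1\le i\le K+1$, since this is precisely what makes the three intervals for $\mbox{deg}(q_{21}(x))$ in cases 1)--3) exhaustive and mutually consistent; if it failed, the trichotomy would be vacuous or ill-posed.
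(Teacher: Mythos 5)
Your proof is correct and is essentially the paper's argument run in the contrapositive direction: the paper shows that each of the three mutually exclusive configurations of $(a_1(x),a_2(x))$ forces the corresponding range of $\mbox{deg}(q_{21}(x))$ (invoking Lemma~\ref{longlemma} for the configuration $\mbox{deg}(a_2(x))<\mbox{deg}(m_1(x))$, $a_1(x)\neq a_2(x)$), whereas you fix the range of $q_{21}(x)$ and eliminate the other configurations by the same degree estimates, with Lemma~\ref{longlemma} entering only in case 3. Your explicit observation that $\mbox{deg}(m(x))+\mbox{deg}(\sigma_i(x))<\mbox{deg}(m_1(x))$ for $1\leq i\leq K+1$ is precisely the fact the paper uses implicitly to keep the three ranges disjoint, so the two write-ups carry the same mathematical content.
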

\begin{proof}
Based on the method of proof by contradiction, we give an indirect proof as follows.
When $\mbox{deg}(a_2(x))<\mbox{deg}(m_1(x))$ and $a_1(x)\neq a_2(x)$, from Lemma \ref{longlemma} we have (\ref{lem1}).
So, from (\ref{condition1}) and (\ref{noseque}) we get
$\mbox{deg}(m_1(x))>\mbox{deg}(q_{21}(x))\geq\mbox{deg}(m(x))+\mbox{deg}(\sigma_{i}(x))$.
When $\mbox{deg}(m_1(x))\leq\mbox{deg}(a_2(x))$, we have $\mbox{deg}(a_1(x)-a_2(x))\geq\mbox{deg}(m_1(x))$.
So, from (\ref{condition1}) and (\ref{noseque}) we get $\mbox{deg}(q_{21}(x))\geq\mbox{deg}(m_1(x))$.
When $a_1(x)=a_2(x)$, we have $q_{21}(x)=\tilde{a}_1(x)-\tilde{a}_2(x)=e_1(x)-e_2(x)$, and from (\ref{condition1}) we get
$\mbox{deg}(q_{21}(x))=\mbox{deg}\left(e_1(x)-e_2(x)\right)<\mbox{deg}(m(x))+\mbox{deg}(\sigma_{i}(x))$.
This completes the proof.
\end{proof}

In the following, we propose a simple closed-form reconstruction algorithm for a polynomial $a(x)$ with $\mbox{deg}(a(x))<\mbox{deg}(M(x))-\mbox{deg}(\sigma_{i}(x))$ for some $i$ with $1\leq i\leq K+1$ such that it can be robustly reconstructed from the erroneous residues $\tilde{a}_1(x),\tilde{a}_2(x)$ with the error bound in (\ref{condition1}).

\begin{algorithm}[H]
  \caption{\!:}
  \label{Ployalg:Framwork1}
  \begin{algorithmic}[1]
  \State Calculate $q_{21}(x)=\tilde{a}_1(x)-\tilde{a}_2(x)$ as in (\ref{noseque}).
   \label{code:11}
  \State (\romannumeral1) If $\mbox{deg}(m_1(x))>\mbox{deg}(q_{21}(x))\geq\mbox{deg}(m(x))+\mbox{deg}(\sigma_{i}(x))$, calculate $f_1(x)=|q_{21}(x)|_{m(x)\sigma_1(x)}$, $f_2(x)=|f_1(x)|_{m(x)\sigma_2(x)}$, $\cdots$, $f_i(x)=|f_{i-1}(x)|_{m(x)\sigma_{i}(x)}$. Then, let
  \begin{equation}
  \hat{k}_2(x)\equiv\frac{q_{21}(x)-f_i(x)}{m(x)}\bar{\Gamma}_{21}(x)\mod \Gamma_1(x),
  \end{equation}
  where $\bar{\Gamma}_{21}(x)$ is the modular multiplicative inverse of $\Gamma_2(x)$ modulo $\Gamma_1(x)$, i.e., $1\equiv\bar{\Gamma}_{21}(x)\Gamma_2(x)\mod \Gamma_1(x)$.

  \noindent(\romannumeral2) If $\mbox{deg}(q_{21}(x))\geq\mbox{deg}(m_1(x))$, calculate $g_0(x)=|q_{21}(x)|_{m_1(x)}$, $g_1(x)=|g_0(x)|_{m(x)\sigma_1(x)}$, $g_2(x)=|g_1(x)|_{m(x)\sigma_2(x)}$, $\cdots$, $g_i(x)=|g_{i-1}(x)|_{m(x)\sigma_i(x)}$. Then, let
  \begin{equation}
 \hat{k}_2(x)\equiv\frac{q_{21}(x)-g_i(x)}{m(x)}\bar{\Gamma}_{21}(x)\mod \Gamma_1(x).
  \end{equation}

  \noindent(\romannumeral3) Otherwise, let $\hat{k}_2(x)=0$.
   \label{code:21}
   \State Calculate
   \begin{equation}\label{rce}
   \hat{a}(x)=\hat{k}_2(x)m_2(x)+\tilde{a}_2(x).
   \end{equation}
  \end{algorithmic}
\end{algorithm}

\begin{theorem}\label{maintheo}
Let $a(x)$ be a polynomial with $\mbox{deg}(a(x))<\mbox{deg}(M(x))-\mbox{deg}(\sigma_{i}(x))$ for some $i$ with $1\leq i\leq K+1$. If the residue error bound $\tau$ satisfies
\begin{equation}\label{newcond}
\tau<\mbox{deg}(m(x))+\mbox{deg}(\sigma_{i}(x)),
\end{equation}
then by \textbf{Algorithm \ref{Ployalg:Framwork1}} we can accurately determine the folding polynomial $k_2(x)$, i.e., $\hat{k}_2(x)=k_2(x)$, and thus $a(x)$ can be robustly reconstructed from the erroneous residues.
\end{theorem}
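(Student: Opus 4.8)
The plan is to reduce the theorem to the single claim that Algorithm~\ref{Ployalg:Framwork1} returns the exact folding polynomial, i.e. $\hat{k}_2(x)=k_2(x)$. Granting this, (\ref{rce}) gives $\hat{a}(x)=k_2(x)m_2(x)+\tilde{a}_2(x)=\big(k_2(x)m_2(x)+a_2(x)\big)+e_2(x)=a(x)+e_2(x)$, hence $\mbox{deg}(\hat{a}(x)-a(x))=\mbox{deg}(e_2(x))\leq\tau$, which is exactly the desired robust reconstruction. Now, the three degree conditions on $q_{21}(x)$ that select the branches of Algorithm~\ref{Ployalg:Framwork1} are mutually exclusive and exhaustive, and since the three conclusions of Lemma~\ref{xiaolem} are also exhaustive in $(a_1(x),a_2(x))$, Lemma~\ref{xiaolem} shows that branches $(\romannumeral1)$, $(\romannumeral2)$, $(\romannumeral3)$ are entered precisely when, respectively, $\mbox{deg}(a_2(x))<\mbox{deg}(m_1(x))$ and $a_1(x)\neq a_2(x)$; $\mbox{deg}(a_2(x))\geq\mbox{deg}(m_1(x))$; and $a_1(x)=a_2(x)$. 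Branch $(\romannumeral3)$ is then immediate: $a_1(x)=a_2(x)$ makes $a(x)-a_1(x)$ a common multiple of $m_1(x)$ and $m_2(x)$, hence of $M(x)$, and since $\mbox{deg}(a(x))<\mbox{deg}(M(x))$ and $\mbox{deg}(a_1(x))<\mbox{deg}(m_1(x))\leq\mbox{deg}(M(x))$ we get $a(x)=a_1(x)$ and $k_2(x)=0=\hat{k}_2(x)$.

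For branches $(\romannumeral1)$ and $(\romannumeral2)$ I would introduce $\gamma(x)=|k_2(x)\Gamma_2(x)|_{\Gamma_1(x)}$ and $e(x)=e_1(x)-e_2(x)$, so $\mbox{deg}(e(x))\leq\tau$. From $a_1(x)-a_2(x)=k_2(x)m_2(x)-k_1(x)m_1(x)=m(x)\big(k_2(x)\Gamma_2(x)-k_1(x)\Gamma_1(x)\big)$ we read off $k_2(x)\Gamma_2(x)\equiv\gamma(x)\bmod\Gamma_1(x)$, hence $\gamma(x)\bar{\Gamma}_{21}(x)\equiv k_2(x)\bmod\Gamma_1(x)$; moreover the hypothesis $\mbox{deg}(a(x))<\mbox{deg}(M(x))-\mbox{deg}(\sigma_i(x))$ yields $\mbox{deg}(k_2(x))<\mbox{deg}(\Gamma_1(x))-\mbox{deg}(\sigma_i(x))\leq\mbox{deg}(\Gamma_1(x))$ as in (\ref{degreek2}), so in fact $k_2(x)=|\gamma(x)\bar{\Gamma}_{21}(x)|_{\Gamma_1(x)}$. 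It therefore suffices to show that $(q_{21}(x)-f_i(x))/m(x)$ in branch $(\romannumeral1)$, and $(q_{21}(x)-g_i(x))/m(x)$ in branch $(\romannumeral2)$, are $\equiv\gamma(x)\bmod\Gamma_1(x)$. For branch $(\romannumeral1)$: since $\mbox{deg}(a_2(x))<\mbox{deg}(m_1(x))$ we have $\mbox{deg}(a_1(x)-a_2(x))<\mbox{deg}(m_1(x))=\mbox{deg}(m(x))+\mbox{deg}(\Gamma_1(x))$, so $(a_1(x)-a_2(x))/m(x)$ has degree below $\mbox{deg}(\Gamma_1(x))$ and hence equals $\gamma(x)$; thus $q_{21}(x)=m(x)\gamma(x)+e(x)$. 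For branch $(\romannumeral2)$: by $\mathbb{F}$-linearity of reduction modulo $m_1(x)$ together with $|a_1(x)-a_2(x)|_{m_1(x)}=|k_2(x)m(x)\Gamma_2(x)|_{m_1(x)}=m(x)\gamma(x)$ and $\mbox{deg}(e(x))\leq\tau<\mbox{deg}(m_1(x))$, we get $g_0(x)=m(x)\gamma(x)+e(x)$; and since $q_{21}(x)-g_0(x)$ is a multiple of $m_1(x)=m(x)\Gamma_1(x)$ while $g_0(x)-g_i(x)$ is a multiple of $m(x)$, it follows that $(q_{21}(x)-g_i(x))/m(x)\equiv(g_0(x)-g_i(x))/m(x)\bmod\Gamma_1(x)$. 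In both branches the remaining task is: starting from $p(x)=m(x)\gamma(x)+e(x)$, show $p(x)-f_i(x)=m(x)\gamma(x)$ (resp. $p(x)-g_i(x)=m(x)\gamma(x)$), where $f_i,g_i$ denote the last iterated residues computed by the algorithm.

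The core argument is to show that the chain of reductions modulo $m(x)\sigma_1(x),\dots,m(x)\sigma_i(x)$ strips off exactly the error $e(x)$. I would set $\rho_0(x)=m(x)\gamma(x)$, $\rho_j(x)=|\rho_{j-1}(x)|_{m(x)\sigma_j(x)}$, and prove by induction on $j=1,\dots,i$ that $f_j(x)=\rho_j(x)+e(x)$ (and likewise $g_j(x)=\rho_j(x)+e(x)$): the base case is $p(x)=\rho_0(x)+e(x)$, and in the inductive step one uses only $\mbox{deg}(e(x))\leq\tau<\mbox{deg}(m(x))+\mbox{deg}(\sigma_i(x))\leq\mbox{deg}(m(x))+\mbox{deg}(\sigma_j(x))=\mbox{deg}(m(x)\sigma_j(x))$ for $j\leq i$, which keeps $\rho_{j-1}(x)+e(x)$ below that degree and makes $e(x)$ inert under reduction modulo $m(x)\sigma_j(x)$. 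Then I would show $\rho_i(x)=0$ by rerunning the Euclidean chain in the proof of Lemma~\ref{longlemma} with $a_1(x)-a_2(x)$ replaced by $m(x)\gamma(x)$: from $k_2(x)\sigma_1(x)\equiv\gamma(x)\bmod\Gamma_1(x)$ and $\mbox{deg}(k_2(x))<\mbox{deg}(\Gamma_1(x))-\mbox{deg}(\sigma_i(x))$ one gets $m(x)\gamma(x)=k_2(x)m(x)\sigma_1(x)-c_1(x)m(x)\Gamma_1(x)$ with $\mbox{deg}(c_1(x))<\mbox{deg}(\sigma_1(x))-\mbox{deg}(\sigma_i(x))$, and substituting $\Gamma_1(x)=d_1(x)\sigma_1(x)+\sigma_2(x)$, $\sigma_1(x)=d_2(x)\sigma_2(x)+\sigma_3(x)$, and so on exactly as in Lemma~\ref{longlemma} gives $\rho_{i-1}(x)=-c_{i-1}(x)m(x)\sigma_i(x)$, a multiple of $m(x)\sigma_i(x)$ (and if some earlier $c_j(x)$ vanishes, then $\rho_j(x)$ is already a multiple of $m(x)\sigma_{j+1}(x)$, so $\rho_i(x)=0$ a fortiori). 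Hence $\rho_i(x)=0$, so $f_i(x)=e(x)$ and $p(x)-f_i(x)=m(x)\gamma(x)$ (and similarly for $g_i(x)$), and by the previous paragraph $\hat{k}_2(x)=|\gamma(x)\bar{\Gamma}_{21}(x)|_{\Gamma_1(x)}=k_2(x)$, which completes the proof.

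The main obstacle is this last step: one has to carry the error polynomial $e(x)$ through the entire Euclidean chain of Lemma~\ref{longlemma} and verify that the two families of degree inequalities, $\mbox{deg}(c_j(x))<\mbox{deg}(\sigma_j(x))-\mbox{deg}(\sigma_i(x))$ and $\mbox{deg}(e(x))<\mbox{deg}(m(x)\sigma_j(x))$, hold at every index $j\leq i$, so that the ``signal'' part $\rho_j(x)$ and the error part $e(x)$ never overlap during the reductions. The boundary cases $i=1$ (where $c_1(x)=0$ and the chain is trivial) and $i=K+1$ (where $\sigma_i(x)$ is a nonzero scalar, recovering Proposition~\ref{proposition1} for two moduli) should then be dispatched explicitly.
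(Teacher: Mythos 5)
Your proposal is correct and follows essentially the same route as the paper's proof: a case split over the three branches of Algorithm~\ref{Ployalg:Framwork1} via Lemma~\ref{xiaolem}, showing the iterated reductions modulo $m(x)\sigma_1(x),\dots,m(x)\sigma_i(x)$ annihilate the signal part while leaving $e_1(x)-e_2(x)$ intact, and then recovering $k_2(x)$ through B\'{e}zout's identity and the inverse $\bar{\Gamma}_{21}(x)$. Your only departures are notational and expository (introducing $\gamma(x)=|k_2(x)\Gamma_2(x)|_{\Gamma_1(x)}$ and the $\rho_j(x)$ chain, and spelling out the vanishing of that chain rather than citing the procedure (\ref{zhankai})--(\ref{xiaozhan}) from Lemma~\ref{longlemma}), which make the argument somewhat more self-contained but do not change the method.
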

\begin{proof}
(\romannumeral1) If $\mbox{deg}(m_1(x))>\mbox{deg}(q_{21}(x))\geq\mbox{deg}(m(x))+\mbox{deg}(\sigma_{i}(x))$, from Lemma \ref{xiaolem} we get $\mbox{deg}(a_2(x))<\mbox{deg}(m_1(x))$ and $a_1(x)\neq a_2(x)$. Based on (\ref{noseque}) and (\ref{newcond}), one can see that $f_1(x)=|q_{21}(x)|_{m(x)\sigma_1(x)}=|a_1(x)-a_2(x)|_{m(x)\sigma_1(x)}+e_1(x)-e_2(x)=r_1(x)+e_1(x)-e_2(x)$, $f_2(x)=r_2(x)+e_1(x)-e_2(x)$, $\cdots$, $f_i(x)=r_i(x)+e_1(x)-e_2(x)$. Following the procedure (\ref{zhankai})-(\ref{xiaozhan}) in the proof of Lemma \ref{longlemma}, there must exist an index $j$ with $1\leq j\leq i$ such that
$r_{j}(x)=r_{j+1}(x)=\cdots=r_{i}(x)=0$.
Therefore, we can have $f_i(x)=e_1(x)-e_2(x)$.
Then, it follows from (\ref{newcond}) that $a_1(x)-a_2(x)=q_{21}(x)-f_i(x)$.
By the B\'{e}zout's lemma for polynomials \cite{CRT3} in (\ref{guanxi}), the folding polynomial $k_2(x)$ can be determined by
\begin{equation}
\begin{split}
k_2(x) & \equiv \frac{a_1(x)-a_2(x)}{m(x)}\bar{\Gamma}_{21}(x)\mod \Gamma_1(x) \\
 & \equiv \frac{q_{21}(x)-f_i(x)}{m(x)}\bar{\Gamma}_{21}(x)\mod \Gamma_1(x),
\end{split}
\end{equation}
where $\bar{\Gamma}_{21}(x)$ is the modular multiplicative inverse of $\Gamma_2(x)$ modulo $\Gamma_1(x)$. Hence, $\hat{k}_2(x)=k_2(x)$.

(\romannumeral2) If $\mbox{deg}(q_{21}(x))\geq\mbox{deg}(m_1(x))$, from Lemma \ref{xiaolem} we get $\mbox{deg}(a_2(x))\geq\mbox{deg}(m_1(x))$. When $i=1$, i.e., $\mbox{deg}(a(x))<\mbox{deg}(M(x))-\mbox{deg}(\sigma_{1}(x))$, we obtain from (\ref{lianxi}) and (\ref{eee11}) that
$a_1(x)-a_2(x)=c_0(x)m(x)\Gamma_1(x)+k_2(x)m(x)\sigma_1(x)\mbox{ with }c_0(x)\neq0$.
Let $h_0(x)=|a_1(x)-a_2(x)|_{m(x)\Gamma_1(x)}$ and $h_1(x)=|h_0(x)|_{m(x)\sigma_1(x)}$. Then, from $\mbox{deg}(k_2(x)m(x)\sigma_1(x))<\mbox{deg}(m(x)\Gamma_1(x))$, we have $h_0(x)=k_2(x)m(x)\sigma_1(x)$ and $h_1(x)=0$. When $i\geq2$, i.e., $\mbox{deg}(a(x))<\mbox{deg}(M(x))-\mbox{deg}(\sigma_{i}(x))$, we have
$h_0(x)\equiv k_2(x)m(x)\sigma_1(x)\mod m(x)\Gamma_1(x)$.
So, there exists a polynomial $c_1(x)$ with $\mbox{deg}(c_1(x))<\mbox{deg}(\sigma_{1}(x))-\mbox{deg}(\sigma_{i}(x))$ such that
$k_2(x)m(x)\sigma_1(x)=c_1(x)m(x)\Gamma_1(x)+h_0(x)$.
Then, similar to the analysis of (\ref{zhankai})-(\ref{xiaozhan}) in the proof of Lemma \ref{longlemma}, there must exist an index $j$ with $1\leq j\leq i$ such that $h_{j}(x)=h_{j+1}(x)=\cdots=h_{i}(x)=0$,
where $h_0(x)=|a_1(x)-a_2(x)|_{m(x)\Gamma_1(x)}$, $h_1(x)=|h_0(x)|_{m(x)\sigma_1(x)}$, $\cdots$, $h_i(x)=|h_{i-1}(x)|_{m(x)\sigma_i(x)}$. Moreover, based on (\ref{noseque}) and (\ref{newcond}), one can see that $g_0(x)=|q_{21}(x)|_{m_1(x)}=|a_1(x)-a_2(x)|_{m_1(x)}+e_1(x)-e_2(x)=h_0(x)+e_1(x)-e_2(x)$, $g_1(x)=h_1(x)+e_1(x)-e_2(x)$, $g_2(x)=h_2(x)+e_1(x)-e_2(x)$, $\cdots$, $g_i(x)=h_i(x)+e_1(x)-e_2(x)$.
Therefore, we can have $g_i(x)=e_1(x)-e_2(x)$.
Then, it follows from (\ref{newcond}) that $a_1(x)-a_2(x)=q_{21}(x)-g_i(x)$.
By the B\'{e}zout's lemma for polynomials in (\ref{guanxi}), the folding polynomial $k_2(x)$ can be determined by
\begin{equation}
\begin{split}
k_2(x) & \equiv \frac{a_1(x)-a_2(x)}{m(x)}\bar{\Gamma}_{21}(x)\mod \Gamma_1(x) \\
 & \equiv \frac{q_{21}(x)-g_i(x)}{m(x)}\bar{\Gamma}_{21}(x)\mod \Gamma_1(x).
\end{split}
\end{equation}
Hence, $\hat{k}_2(x)=k_2(x)$.

(\romannumeral3) Otherwise, we know $a_1(x)=a_2(x)$ from Lemma \ref{xiaolem}. In this case, it is immediate to have $k_2(x)=0$, and thereby  $\hat{k}_2(x)=k_2(x)$=0.

Therefore, by \textbf{Algorithm \ref{Ployalg:Framwork1}} the folding polynomial $k_2(x)$ can be accurately determined, and $a(x)$ can be robustly reconstructed as in (\ref{rce}) from the erroneous residues.
\end{proof}

\begin{remark}
Due to $\mbox{deg}(\sigma_{K+1}(x))=0$, Theorem \ref{maintheo} coincides with Proposition \ref{proposition1} in the case of two moduli. When the dynamic range attains the maximum, i.e., the degree of the lcm of the moduli, the residue error bound $\tau$ decreases to the degree of the gcd of the moduli.
\end{remark}

To well illustrate Theorem \ref{maintheo} and \textbf{Algorithm \ref{Ployalg:Framwork1}}, we next present an example.
\begin{example}
Let $\mathbb{F}_2$ denote the finite field with two elements $0$ and $1$, and $\mathbb{F}_2[x]$ denote the set of all polynomials with coefficients in $\mathbb{F}_2$. Let $m_1(x)=(x^2+1)(x^6+x^3+1)$ and $m_2(x)=(x^2+1)(x^9+x^7+x+1)$ be two polynomial moduli in $\mathbb{F}_2[x]$. The gcd of the two moduli is $m(x)=x^2+1$, $\Gamma_1(x)=x^6+x^3+1$, $\Gamma_2(x)=x^9+x^7+x+1$, and the degree of the lcm of the two moduli is $\mbox{deg}(M(x))=17$. By (\ref{defsigma}), we have $\sigma_1(x)=x^4$, $\sigma_2(x)=x^3+1$, $\sigma_3(x)=x$ and $\sigma_4(x)=1$. According to Theorem \ref{maintheo}, we have the following result in Table \ref{table1}, where the last row, i.e., Level \uppercase\expandafter{\romannumeral1}, is the known result in Proposition \ref{proposition1}.
More specifically, given a polynomial $a(x)=x^{15}+x^{11}+x^7+x^6+x+1$, its two residues and the folding polynomial $k_2(x)$ can be calculated as $a_1(x)=x^7+x^2+x+1$, $a_2(x)=x^5+x^4+x+1$ and $k_2(x)=x^4$, respectively. We next use \textbf{Algorithm \ref{Ployalg:Framwork1}} for $i=3$ (or Level \uppercase\expandafter{\romannumeral2} in Table \ref{table1}) in Theorem \ref{maintheo} to robustly reconstruct $a(x)$ when the erroneous residues are given by
\begin{equation}
\tilde{a}_1(x)=x^7+x^2+x+1+\underbrace{x^2+x+1}_{e_1(x)}=x^7
\end{equation}
and
\begin{equation}
\tilde{a}_2(x)=x^5+x^4+x+1+\underbrace{x}_{e_2(x)}=x^5+x^4+1,
\end{equation}
where $\max(\mbox{deg}(e_1(x)),\mbox{deg}(e_2(x)))<3$ makes the condition (\ref{newcond}) hold.
\begin{itemize}
  \item Calculate $q_{21}(x)=x^7+x^5+x^4+1$.
  \item Due to $8=\mbox{deg}(m_{1}(x))>\mbox{deg}(q_{21}(x))=7\geq\mbox{deg}(m(x))+\mbox{deg}(\sigma_3(x))=3$, we calculate $f_1(x)=|q_{21}(x)|_{m(x)\sigma_1(x)}=x^4+1$, $f_2(x)=|f_1(x)|_{m(x)\sigma_2(x)}=x^4+1$, $f_3(x)=|f_2(x)|_{m(x)\sigma_3(x)}=x^2+1$. Then, we have
      \begin{equation}
      \begin{split}
      \hat{k}_2(x)&\equiv\frac{q_{21}(x)-f_3(x)}{m(x)}\bar{\Gamma}_{21}(x)\mod \Gamma_1(x)\\
      &\equiv (x^5+x^2)x^5\mod x^6+x^3+1\\
      &\equiv x^4\mod x^6+x^3+1.
      \end{split}
      \end{equation}
      So, we obtain $\hat{k}_2(x)=x^4$.
  \item We get $\hat{a}(x)=\hat{k}_2(x)m_2(x)+\tilde{a}_2(x)=x^{15}+x^{11}+x^7+x^6+1$.
\end{itemize}
Clearly, the above obtained $\hat{a}(x)$ is a robust estimation of $a(x)$, i.e., $\mbox{deg}(\hat{a}(x)-a(x))=1<3$.

\newcounter{mytempeqncnt}
\begin{table}[!h]
\caption{The trade-off between the dynamic range and the residue error bound.}\label{table1}
\centering
\begin{tabular*}{0.5\textwidth}{@{\extracolsep{\fill}}cccc}  
\hline
level& degree of $\sigma_i(x)$ &residue error bound&dynamic range\\ \hline  
\uppercase\expandafter{\romannumeral4}&$\mbox{deg}(\sigma_1(x))=4$ &$\tau<2+4=6$ &$\mbox{deg}(a(x))<17-4=13$ \\         
\uppercase\expandafter{\romannumeral3}&$\mbox{deg}(\sigma_2(x))=3$ &$\tau<2+3=5$ &$\mbox{deg}(a(x))<17-3=14$\\        
\uppercase\expandafter{\romannumeral2}&$\mbox{deg}(\sigma_3(x))=1$ &$\tau<2+1=3$ &$\mbox{deg}(a(x))<17-1=16$ \\
\uppercase\expandafter{\romannumeral1}&$\mbox{deg}(\sigma_4(x))=0$ &$\tau<2+0=2$ &$\mbox{deg}(a(x))<17-0=17$ \\
\hline
\end{tabular*}
\end{table}
\end{example}

\section{Conclusion}
In this paper, we studied a multi-level robust CRT for polynomials, which shows that a polynomial with different levels of dynamic ranges can be robustly reconstructed from the erroneous residues with correspondingly different levels of error bounds. Furthermore, a simple closed-form reconstruction algorithm was proposed. 

\end{document}